\newtheorem{mydef}{\bf Definition}
\newtheorem{myprob}{\bf Problem}
\newtheorem{myprop}{\bf Proposition}
\newcounter{WSQcomment}
\title{\LARGE \bf 
SPARC: Prediction-Based Safe Control for Coupled Controllable and Uncontrollable Agents with Conformal Predictions}
\author{Shuqi Wang, Siqi Wang, Shaoyuan Li and Xiang Yin
\thanks{This work was supported by the National Natural Science Foundation of China (62173226,62061136004).}
\thanks{Shuqi Wang, Siqi Wang, Shaoyuan Li and Xiang Yin are with Department of Automation and Key Laboratory of System Control and Information Processing, Shanghai Jiao Tong University, Shanghai 200240, China. e-mail: \{wangshuqi, sq\_wang, syli, yinxiang\}@sjtu.edu.cn.}
}
\begin{document}

\maketitle
\thispagestyle{empty}
\pagestyle{empty}

\begin{abstract}
We investigate the problem of safe control synthesis for autonomous systems operating in environments with uncontrollable agents whose dynamics are unknown but coupled with those of the controlled system. 
This scenario naturally arises in various applications, such as autonomous driving and human-robot collaboration, where the behavior of uncontrollable agents, such as pedestrians or human operators, cannot be directly controlled but is influenced by the actions or the states of the autonomous vehicle or robot. 
In this paper, we present SPARC (\textsc{S}afe \textsc{P}rediction-Based \textsc{R}obust \textsc{C}ontroller for Coupled \textsc{A}gents), a new framework designed to ensure safe control in the presence of  uncontrollable agents that are coupled with the controlled system.  
Particularly, we introduce a joint distribution-based approach to account for the coupled dynamics of the controlled system and uncontrollable agents. Then, SPARC leverages conformal prediction to quantify the uncertainty in data-driven prediction of the agent behavior. By integrating the control barrier function (CBF), SPARC provides provable safety guarantees at a high confidence level.  We illustrate our framework with a case study of an autonomous driving scenario with walking pedestrians. 

\end{abstract}

\section{INTRODUCTION}

Safety is one of the major concerns in autonomous systems such as self-driving cars or industrial robots. These systems must operate in dynamic environments while avoiding unsafe behaviors. For instance, in autonomous driving, the ego vehicle must continuously monitor its surroundings to ensure the safety of itself, pedestrians, and other vehicles.
To achieve safe control in real-world scenarios, it is essential to account for the presence of uncontrollable agents in the environment. 

There are two major challenges in achieving safe control in such scenarios.
First, the dynamics of the controlled system and the uncontrollable agents are  \emph{coupled} in general. In other words, the behavior of the uncontrollable agent depends on the state of the controlled system. For example, in autonomous driving, a pedestrian's movements may change in response to the position and speed of a nearby vehicle. 
Second, the intentions of such agents are often \emph{unknown a priori} and difficult to model precisely. 
As a result, data-driven approaches have emerged as a powerful tool for handling the intentions of uncontrollable agents. By leveraging large datasets collected from real-world interactions, one can develop highly accurate predictors to forecast the trajectories of uncontrollable agents for control purposes. For instance, as shown in Figure~\ref{fig:instance}, in autonomous driving within pedestrian-rich environments, there are extensive datasets that capture how pedestrians react to vehicles under various conditions. 
However, these predictors fail to measure uncertainty in a formal and systematic manner, rendering the prediction-based control susceptible to unsafe outcomes.


\begin{figure}[tp]   
    \centering
    \begin{subfigure}[b]{0.22\textwidth}
        \centering
        \includegraphics[width=0.8\textwidth]{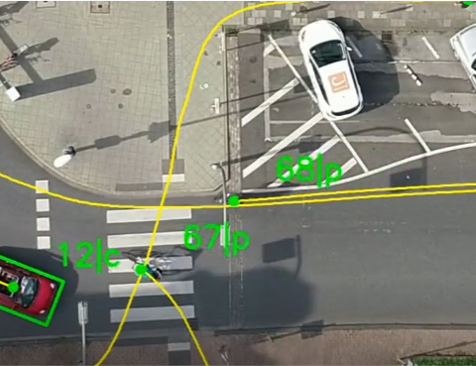}  
        \caption{InD Dataset\cite{inD}}  
        \label{fig:inD}
    \end{subfigure} 
    \begin{subfigure}[b]{0.22\textwidth}
        \centering
        \includegraphics[width=0.8\textwidth]{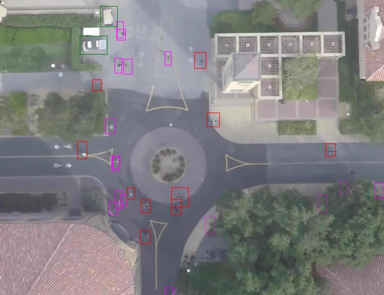}  
        \caption{Stanford Drone Dataset\cite{SDD}} 
        \label{fig:SDD}
    \end{subfigure}
    \caption{Instances of coupled dynamics in real-world scenarios can be observed across various datasets. For example, the interactions between traffic participants often exhibit mutual influence. In Fig.\ref{fig:inD}, vehicles may decelerate as they encounter bicycles at crosswalks. At the same time bicycles may alter their trajectories to avoid collisions with vehicles, resulting in curved paths.}
    \label{fig:instance}
\end{figure}

In this paper, we propose a new framework, called SPARC (\textbf{S}afe \textbf{P}rediction-Based \textbf{R}obust \textbf{C}ontroller for Coupled \textbf{A}gents), for ensuring safe control in the presence of uncontrollable agents. Specifically, we assume the dynamics of the uncontrollable agent are unknown and coupled with the dynamics of the controlled system. To address this challenge, we leverage the theory of conformal prediction to quantify uncertainty in data-driven predictions of the uncontrollable agent's intentions. Unlike previous approaches, which assume that the dynamics of the uncontrollable agent are independent of the controlled system, we develop a novel method to sample data and construct conformal regions based on the \emph{joint distribution} over an augmented state space. This approach allows us to capture the inter-dependencies between the two agents and improve prediction accuracy. Furthermore, by incorporating the control barrier function (CBF) techniques, we can provide provable safety guarantees with a confidence bound, ensuring robust system performance.
Our framework is particularly well-suited for applications such as autonomous driving in urban environments or human-robot collaboration, where uncontrollable agents operate based on their own underlying hidden decision-making processes that are influenced by the behavior of the controlled systems.

\begin{figure*}[t]   
    \centering \includegraphics[width=0.75\textwidth]{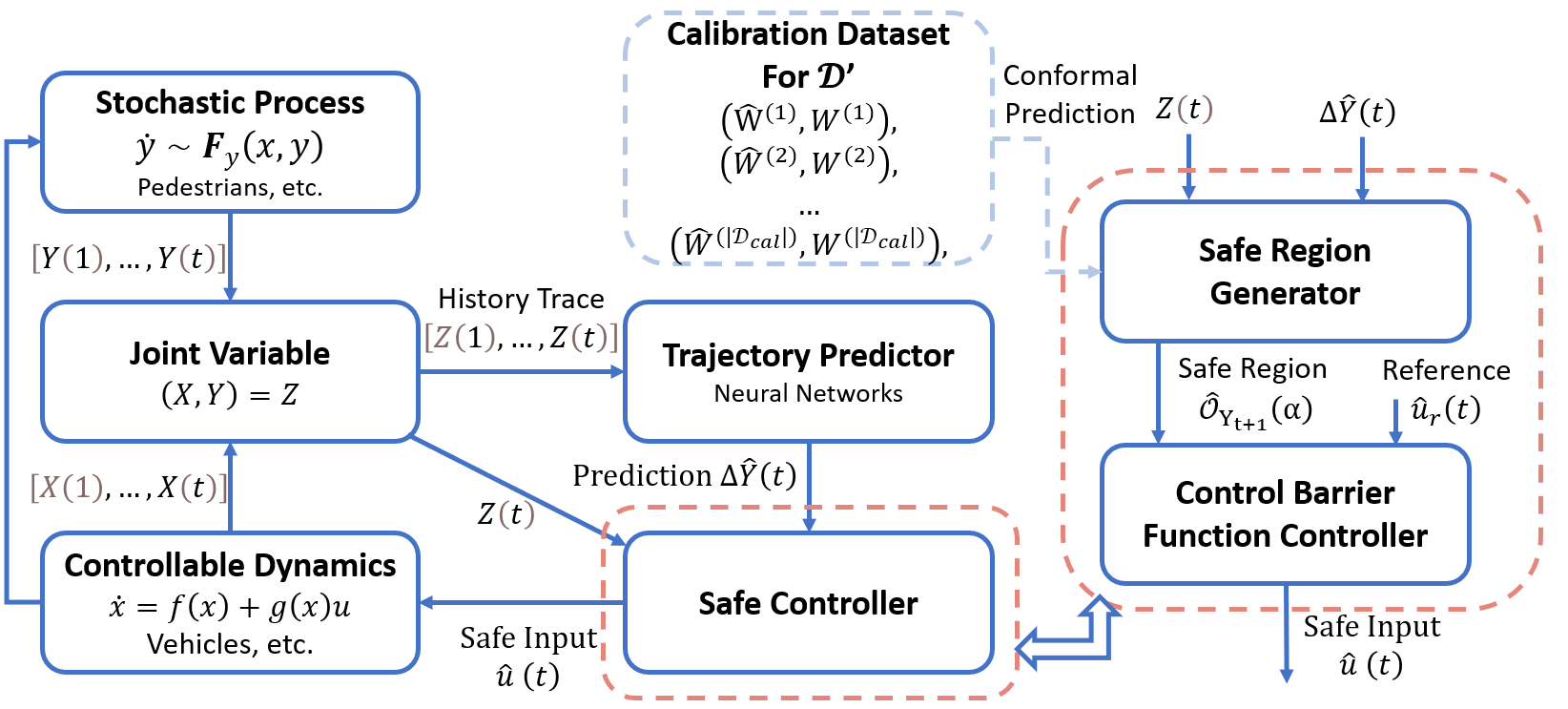} 
        \caption{SPARC (Our Framework)} 
        \label{fig:frame_now}
    \label{fig:frame}
\end{figure*}

\section{Related Works}

\textbf{Safety-Critical Control using CBFs. }
Control Barrier Functions (CBFs) are a widely used approach for ensuring safety in control systems \cite{yang2022differentiable,xiao2021high,xiao2021adaptive,xiaosafe,ames2019control}.
First introduced by Ames et al. \cite{ames2016control}, the key idea behind CBFs is to encode safety constraints into a   function that guarantees the system remains within a safe set during operation. However, a significant challenge of using CBFs is their reliance on an accurate dynamic model of the system. When the system includes an uncontrollable component with unknown dynamics, applying traditional CBFs can be problematic. In such cases, robust CBFs \cite{jankovic2018robust,buch2021robust} can be used to account for the worst-case scenario, which, however, often results in overly conservative control strategies.

\textbf{Data-Driven Safety Control Synthesis. }
When the system dynamic is \emph{unknown a priori}, many recent works have studied how to synthesize  safe controllers using data; see, e.g., \cite{nejati2023data,zhong2022synthesizing,ajeleye2024data,mitsioni2023safe,salamati2024data,chen2023data,yin2024formal}.
For instance, in \cite{salamati2024data}, the authors convert the safe control synthesis to a convex program by using barrier certificates. 
The scenario approach is used to provide a probabilistic safety guarantee based on sampled data. However, most of these works cannot handle the presence of uncontrollable agents whose dynamics are coupled with the control systems.

\textbf{Conformal Prediction in Control Synthesis. }
Conformal prediction is an emerging technique in machine learning that provides a principle way to quantify the uncertainty in model predictions \cite{shafer2008tutorial}. It works by generating prediction sets or confidence regions that, with a specified level of certainty, are guaranteed to contain the true outcome. This technique is model-free, meaning it can be applied to any underlying predictive model, and it offers formal statistical guarantees on the accuracy of its predictions.
Recently, conformal prediction has been extensively applied to control synthesis problems; see, e.g., \cite{lindemann2023safe,yang2023safe,dixit2023adaptive,yu2023,stamouli2024recursively,sun2024conformal}. 
Our work is closely related to \cite{yu2023} and \cite{lindemann2023safe}, which address the problem of safe planning in unknown dynamic environments. In these works, the authors also employ conformal prediction to estimate the confidence region of the uncontrollable agent. However, a key distinction is that these approaches assume the behavior of uncontrollable agents is independent of the controlled system, an assumption that may not hold in many real-world scenarios. In contrast, our framework considers a coupled setting, 
where the behavior of uncontrollable agents is influenced by the state of the controlled system.

\section{Problem Formulation}

\emph{Notations: }We denote $\mathbb{R}, \mathbb{N}$, and $\mathbb{R}^n$ as the set of real numbers, natural numbers, and real vectors, respectively. 
Let $\beta: \mathbb{R} \rightarrow \mathbb{R}$ denote an extended class $\mathcal{K}_{\infty}$ function, i.e., a strictly increasing function with $\beta(0)=0$. 
For a vector $v \in \mathbb{R}^n$, let $|v|$, $\|v\|$, $\|v\|_\infty$ denote its $\ell_1$-norm, Euclidean norm and $\ell_\infty$-norm. 
For a finite set $D$, let $|D|$ denote its cardinality.

\subsection{System Models}

\textbf{Controlled System:} 
The system under control is modeled as a nonlinear control-affine system of form
\begin{equation}\label{eq:control_dynamic}
\dot{x}(t) = f(x(t)) + g(x(t)) u(t) =: F(x(t), u(t)),
\end{equation}
where $x(t) \in \mathbb{R}^{n}$ represents the state of the system at time $t$, 
and $u(t) \in \mathcal{U}$ denotes the control input  with $\mathcal{U} \subseteq \mathbb{R}^c$ being the set of admissible control inputs.
The functions $f: \mathbb{R}^n \rightarrow \mathbb{R}^{n}$ and $g: \mathbb{R}^n \rightarrow \mathbb{R}^{n \times c}$ represent the internal system dynamics and the input influence, respectively, and both are assumed to be locally Lipschitz continuous.

\textbf{Uncontrollable Agent:}
We consider a scenario where the controlled system is operating in an environment with a single uncontrollable agent for the sake of simplicity. And it can be easily extended to $N$ uncontrollable agents in principle by concatenating their states and considering them as a whole.
Specifically, we assume that the behavior of the uncontrollable agent is coupled with the controllable agent, i.e., the evolution of the dynamics of uncontrollable agents is conditioned on the state of the controllable agent $x(t)$. 
For instance, in autonomous driving scenarios involving moving pedestrians, their behavior is strongly influenced by the car's position and velocity. Specifically, pedestrians tend to move away more quickly as the car gets closer, and they adopt more cautious walking strategies when the car approaches at higher speeds.
The uncontrollable agent dynamics can be modeled as a 
stochastic process
\begin{equation}
    \dot{y}(t) \sim \mathbb{F}_y(x(t),y(t)), 
    \label{eq:uncontrol_dynamic}
\end{equation}
where $y(t) \in \mathbb{R}^m$ is the state of the uncontrollable agent, and $\mathbb{F}_y$ is a distribution parameterized by the states of the controlled and uncontrollable systems, which is not known a priori.  


We frame safety as the 0-upper level set of a Lipschitz continuous function $h_{xy}(x,y)$, i.e., 
$\mathcal{S}=\{(x,y)\mid h_{xy}(x,y)\geq 0\}\subseteq  \mathbb{R}^{n}\times \mathbb{R}^{m}$. 
And our objective is to synthesize a safety filter for reference control at each time instant such that the controlled system can ensure safety with respect to the uncontrollable agent, with a certain level of confidence.

\begin{myprob}
    Given $h_{xy}$ defined with a safety set $\mathcal{S}$, a reference control $u_r$ at time $t$, we wish to obtain a safe control input $u^*$, that minimally interferes with the reference control while assuring safety. 
    Formally, we aim to solve the following problem
\begin{equation}
\begin{aligned}
     u^* &= \arg\min_{u \in \mathcal{U}}|| u - u_r ||^2  \\
     \text{s.t.} \quad 
     & \text{Prob}[h_{xy}(x(t'),y(t')) \geq 0] \geq 1 - \alpha , \; \forall t'> t, \\
     & x(t') \text{ satisfies the dynamics of } \eqref{eq:control_dynamic}, \\
     & y(t') \text{ satisfies the dynamics of } \eqref{eq:uncontrol_dynamic}.
\end{aligned}
\label{eq:optimal_goal}
\end{equation}
\end{myprob}\medskip

Since the behavior of the uncontrollable agent is complete unknown, we will train a neural network to model its behavior and provide confidence level guarantee for safety of form: ``\emph{the probability that the system becomes unsafe with the filtered input is lower than a threshold
$\alpha$}".

\section{ \textbf{S}afe \textbf{P}rediction-Based \textbf{R}obust \textbf{C}ontroller for Coupled  \textbf{A}gents   (SPARC)}

\subsection{Overall Control Framework}
Before we formally present the implementation details of the proposed controller, we provide a high level overview of the entire SPARC framework. 
\begin{itemize}
  \item 
  At each discrete time step, referred to as a prediction update point, we use a neural network to forecast the behavior of  the uncontrollable agent over the next control period $\Delta t$. 
  \item 
  We apply conformal prediction techniques to expand the predicted results into a confidence region that contains the agent’s reachable states with high probability.
  \item 
  Then within each control period $\Delta t$, we apply the technique of control barrier functions to generate control inputs continuously to avoid reaching the predicted region of the uncontrollable agent. 
  \item 
  The above process is repeated for prediction update point indefinitely.
\end{itemize}
To implement the generic idea of the above framework, the main challenge lies in the coupling between the controlled system and the uncontrollable agent. Specifically, the behavior of the uncontrollable agent depends on the behavior of the controlled system. Therefore, one cannot simply design a predictor and apply conformal prediction solely based on the trajectory data of the uncontrollable agent, as the agent will behave differently when the controlled system is in a different state. This issue is the well-known \emph{distribution shift} problem in conformal prediction. To address this challenge, our approach is to design a predictor and perform conformal prediction over an \emph{augmented state space} that accounts for the joint behavior of both the controlled system and the uncontrollable agents.

\subsection{Data Collection and Predictor Design}
To collect   data and to predict the behavior of the uncontrollable agent, we sample the trajectories of both the controlled system and the uncontrollable agent with time interval $\Delta t$.  
We denote by $[X_0, X_1, \ldots, X_k]$ and $[Y_0, Y_1, \ldots, Y_k]$ the sampled trajectories. For convenience, let $X$ denote the state space of the controllable agent which includes both position and velocity, while $Y$ represents the pedestrian's position.
To capture the coupling between the controlled system and the uncontrollable agents,  
at each sampling instant, the  movement of the uncontrollable agent, i.e., 
$\Delta Y_{i}:=  Y_{i+1}-Y_i $, follows a distribution parameterized by the state of the controlled system in \eqref{eq:uncontrol_dynamic}, i.e., 
\begin{equation}
    \Delta Y_{i}  \sim \mathcal{D}(X_i,Y_i) 
    \label{eq:uncontrollable_distribution}. 
\end{equation} 
Note that distribution $\mathcal{D}(X_i, Y_i)$ is memoryless and time-invariant meaning that the behavior of the uncontrollable agent only depends on the current state of the system rather than the entire history. This assumption is valid in many applications such as the behavior of moving pedestrians. 
Furthermore, this distribution is \emph{unknown a priori} and we can only sample data according to the distribution.


\noindent\textbf{Data Collections.} \quad
We assume that we have access to $\bar{K}$ independent trajectories of observations
$X^{(j)} = (X_1^{(j)}, X_2^{(j)}, \ldots)$, $Y^{(j)} = (Y_1^{(j)}, Y_2^{(j)}, \ldots)$. 
Let $Z^{(j)}_{(0,t]} := (Z_1^{(j)}, \dots, Z_t^{(j)})$ where $Z_{i}^{(j)} = (X_i^{(j)},Y_i^{(j)})$. 

Let $\Delta Y^{(j)} = (\Delta Y_1^{(j)}, \Delta Y_2^{(j)}, \ldots) , \forall j = 1,2,\ldots, \bar{K}$ sampled from the random distribution of $(X^{(j)},Y^{(j)})$, we denote $W^{(j)} = [(Z^{(j)}_{(0,1]},\Delta Y^{(j)}_1),(Z^{(j)}_{(0,2]},\Delta Y^{(j)}_2),\dots]$ 

We partition these trajectories into calibration and training sets, denoted as $D_{\text{cal}} := \{W^{(1)}, \dots, W^{(|D_{\text{cal}}|)}\}$ and $D_{\text{train}} := \{W^{(|D_{\text{cal}}|+1)}, \dots, W^{(\bar{K})}\}$, respectively.

\noindent\textbf{Trajectory Predictor.} \quad 
Our objective is to use the training set $D_{\text{train}} $ to design a predictor to forecast the future state of the uncontrollable agent. 
Note that, since $\Delta Y_{i}+Y_i = Y_{i+1} $, it suffices to predict $\Delta Y_{i}$ as we assume that all state information can be observed perfectly. 
In principle, although the $\Delta Y_{i}$ only depends on the current states, we still use the entire past samples to perform the prediction to improve the prediction accuracy considering the random disturbance influence on the uncontrollable agent dynamic. 
Therefore, the predictor is of form 
\begin{equation}
    \Delta\hat{Y}_{t} = \Omega(Z_{(0,t]}|\theta) \label{eq:NN}.
\end{equation}  
Note that $\Omega$ could be implemented by any trajectory predictor such as neural networks (NNs) with parameters $\theta$.

\subsection{Uncertainty Qualification with Conformal Prediction}
The prediction $\Delta\hat{Y}_{t}$ generated by the trained model, is not guaranteed to be accurate. To address this, we employ conformal prediction to construct a prediction region for $\Delta\hat{Y}_{t}$ that holds with high probability.



\begin{mydef}
For a joint trajectory $Z^{(j)}_{(0,i]} = (Z_{1}^{(j)}, Z_{2}^{(j)}, \ldots, Z_{i}^{j})$ and $W^{(j)}_{i} = (Z^{(j)}_{(0,i]}, \Delta Y^{(j)}_{i})\in D_{\text{cal}}$, the nonconformity score $R^{(j)}$ is defined as the prediction error
\begin{equation}
    R^{(j)} := \sup _{W^{(j)}_{i} \in W^{(j)}}||\Delta Y_{i}^{(j)} - \Delta \hat{Y}_{i}^{(j)}||,
\end{equation} 
where $\Delta Y_{i}^{(j)}$ is the ground truth of a predicted $\Delta \hat{Y}_{i}^{(j)}$.
\end{mydef}
The following result states how to use the  calibration dataset to generate a confidence regions that account for potential variations in the agent's behavior. 
\begin{myprop}
Given a calibration dataset $D_{\text{cal}} $, and a failure probability tolerance $\alpha \in [0,1]$, by \cite{tibshirani2019conformal}, $\forall X \in \mathbb{R}^{n_X}$, $t \in \mathbb{R}_{\geq 0}$, $\operatorname{Prob}\left(R^{(0)} \leq \bar{R}(\alpha) \right) \geq 1-\alpha$ holds where 
\begin{equation}
    \bar{R}(\alpha) = \text { Quantile }_{1-\alpha}\left(R^{(1)}, \ldots, R^{(|D_{\text{cal}}|)}, \infty\right)
    \label{eq:CPbound}
\end{equation}
Note that here $\bar{R}(\alpha) \in \mathbb{R}^m$ is a vector for each dimension and $|D_{\text{cal}}|$ is the calibration dataset size. Similarly, we obtain $\bar{R} = R^{(r)}$, where $r = \lceil |D_{\text{cal}}| \cdot (1-\alpha) \rceil$. As before, for meaningful prediction regions, we require $|D_{\text{cal}}| \geq \lceil (|D_{\text{cal}}|+1)(1-\alpha) \rceil$, otherwise $\bar{R}(\alpha) = \infty$.
\end{myprop}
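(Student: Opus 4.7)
The plan is to derive this proposition as a direct application of the standard split conformal prediction guarantee, whose proof hinges on the exchangeability of the nonconformity scores across trajectory indices.

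First I would condition on the realization of the training set $D_{\text{train}}$ so that the predictor $\Omega(\cdot\mid\theta)$ becomes a fixed deterministic map from joint trajectories to predicted increments, and consequently the score map $W \mapsto R$ is deterministic. Next, since the $\bar{K}$ trajectories $W^{(1)}, \ldots, W^{(\bar{K})}$ are drawn independently from the joint law of the coupled process $(X,Y)$, the calibration trajectories $W^{(1)}, \ldots, W^{(|D_{\text{cal}}|)}$ together with a fresh test trajectory $W^{(0)}$ from the same joint distribution form an i.i.d. collection. Applying the same fixed deterministic score map to each yields that $R^{(0)}, R^{(1)}, \ldots, R^{(|D_{\text{cal}}|)}$ are themselves i.i.d., and in particular exchangeable.

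Then I would invoke the standard quantile lemma for exchangeable real-valued random variables: among $n+1$ exchangeable scores with no ties (or breaking ties uniformly at random), the test score $R^{(0)}$ is equally likely to have any rank from $1$ to $n+1$. Taking $n=|D_{\text{cal}}|$ and $r=\lceil (n+1)(1-\alpha)\rceil$, it follows that
\[
\operatorname{Prob}\bigl(R^{(0)} \leq R^{(r)}\bigr) \;\geq\; \frac{r}{n+1} \;\geq\; 1-\alpha.
\]
Identifying $R^{(r)}$ with the $(1-\alpha)$-quantile of the augmented multiset $\{R^{(1)},\ldots,R^{(n)},\infty\}$ (the appended $\infty$ being the usual device to keep the quantile well-defined when $r=n+1$, which also recovers the trivial case $\bar{R}(\alpha)=\infty$ when $|D_{\text{cal}}|<\lceil(|D_{\text{cal}}|+1)(1-\alpha)\rceil$) yields $\operatorname{Prob}(R^{(0)}\leq \bar{R}(\alpha)) \geq 1-\alpha$.

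The main obstacle is justifying exchangeability in this coupled setting, and this is exactly the conceptual point the paper is making. One might worry that the coupling between $X$ and $Y$ through the unknown distribution $\mathcal{D}(X_i,Y_i)$ destroys the assumptions needed for conformal prediction. The key observation is that exchangeability is required across the trajectory index $j$, not across time within a single trajectory: since the entire joint trajectory $W^{(j)}$ is sampled as one i.i.d. draw from the joint law over the augmented state space, arbitrary temporal and cross-component dependencies inside a trajectory are harmless for cross-sample exchangeability. A secondary subtlety is the remark that $\bar{R}(\alpha)\in\mathbb{R}^m$: since the score $R^{(j)}$ as defined is already a scalar norm, a strictly component-wise bound would require either redefining $R^{(j)}$ coordinate-by-coordinate and invoking the scalar result in each dimension with a union bound over $m$, or reinterpreting the comparison $R^{(0)}\leq\bar{R}(\alpha)$ through the scalar norm already used.
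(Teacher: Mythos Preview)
Your argument is correct and is precisely the standard split-conformal exchangeability proof underlying the reference the paper cites. Note, however, that the paper does not actually supply a proof of this proposition: it simply invokes the result from \cite{tibshirani2019conformal} and restates the quantile formula, so there is no in-paper argument to compare against. Your derivation is therefore strictly more than what the paper offers, and it correctly isolates the one point the paper is implicitly relying on---that exchangeability is needed only across the trajectory index $j$, so the $X$--$Y$ coupling within a trajectory is irrelevant. Two small remarks: your rank index $r=\lceil (n+1)(1-\alpha)\rceil$ is the standard one and is what actually yields the $1-\alpha$ guarantee, whereas the paper's displayed $r=\lceil |D_{\text{cal}}|(1-\alpha)\rceil$ appears to be a minor slip; and your closing observation about the tension between the scalar norm definition of $R^{(j)}$ and the claim $\bar{R}(\alpha)\in\mathbb{R}^m$ is well taken---the paper later uses coordinate-wise radii $\bar{R}_i(\alpha)$, which indeed requires running the scalar argument per coordinate rather than on the norm.
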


Conformal prediction implies the nonconformity score $R$ can essentially be bounded, with probability $1-\alpha$, within the conformal region $\bar{R}(\alpha)$. 
Given that the test and calibration data are exchangeable, we can hence conclude that for each coordinate direction $i = 1,...,m,$
\begin{equation}
     \operatorname{Prob}\left[\left( \frac{| y_i - (Y_t + \Delta \hat{Y}_{t})_i |}{\bar{R}_i(\alpha)} \right) \leq 1\right]  \geq 1-\alpha \label{eq:CP_bound_final}
\end{equation}
From known measurement $X_t$ and $Y_t$, and prediction $\Delta \hat{Y}_{t}$, we denote the conformal region for $Y_{t+1}$ in a probabilistic sense as 


\begin{equation}
\hat{\mathcal{O}}_{Y_{t+1}}(\alpha) = \left\{ y \in \mathbb{R}^m \mid \max_{i} \left( \frac{| y_i - (Y_t + \Delta \hat{Y}_{t})_i |}{\bar{R}_i(\alpha)} \right) \leq 1 \right\}
\end{equation}
This set represents a \emph{weighted $\ell_{\infty}$-norm ball} centered at $(Y_t + \Delta \hat{Y}_{t}) $, where $\bar{R}(\alpha) $ determines the scaling in each coordinate direction. It satisfies that 
\begin{equation}
    \operatorname{Prob}[Y_{t+1} \in \hat{\mathcal{O}}_{ Y_{t+1}}(\alpha)]  \geq 1-\alpha.
    \label{eq:safe_region}
\end{equation}
Note that $ \hat{\mathcal{O}}_{ Y_{t+1}}(\alpha)$ is the estimated conformal prediction region of the uncontrollable agent in the next sampling step. 
\subsection{Safe Control using Control Barrier Functions}

After we quantified the uncertainty with the calculated prediction region, we design an uncertainty-aware safety filter with CBF.

\begin{mydef}(\cite{ames2016control})
Let $\mathcal{C} \subseteq \mathbb{R}^{n}$ be the zero-superlevel set of a continuously differentiable function $h :\mathbb{R}^{n} \rightarrow \mathbb{R}$. The function $h$ is a control barrier function (CBF) for the system in Eq.~\eqref{eq:control_dynamic} if there exists an extended class $\mathcal{K}_\infty$ function $\beta$ such that
\begin{equation}
    \sup_{u\in\mathcal{U}} [L_fh(x) + L_gh(x)] \geq -\beta(h(x)) \label{eq:CBF}
\end{equation}
where $L_fh(x)$ and $L_gh(x)$ represent the Lie derivatives.  
\end{mydef}

At each time step, the state of the uncontrollable agent is treated as a fixed parameter, reducing $h_{xy}(x,y)$ to $h(x)$. Therefore, Problem~1 can be formulated as a safe control problem within each prediction update period by considering the estimated region of the uncontrollable agent as an obstacle simply by define 
\begin{equation}
\mathcal{C}=
\{  x\in \mathbb{R}^n\mid \forall y\in \mathbb{R}^{m}:(x,y)\in \mathcal{S}\}
\end{equation}
as the safe set for the controlled system such that the overall system remain in safe set $\mathcal{S}$. 
Therefore, it suffices to find a CBF $h(x)$ 
such that 
\begin{equation}
C=\{x\in \mathbb{R}^n\mid {h}(x)\geq 0 \}\subseteq \mathcal{C}.
\end{equation}
In the literature, there are many existing approach for finding such CBF $h(x)$, e.g., by sum-of-square programming \cite{zhao2023convex} or learning-based approach \cite{robey2020learning,srinivasan2020synthesis}. 

 Now, given the synthesized CBF $h(x)$, the set of admissible control inputs is defined by 
\begin{equation}
    K_{CBF} := \{u \in \mathcal{U}\mid L_f h(x) + L_g h(x)u 
     \geq -\beta(h(x))\}
\label{eq:KCBF}
\end{equation}
The safe input can be any control law $u(t)$ from $K_{CBF}$ in Eq.~\eqref{eq:KCBF}, e.g., one computed by solving the quadratic optimization problem \eqref{eq:optimal_goal}.

\begin{myprop}
Consider the control law $u(t)$ from $K_{CBF}$ in Eq.~\eqref{eq:KCBF},the system composed by controllable and uncontrollable variables Eq.~\eqref{eq:control_dynamic}, Eq.~\eqref{eq:uncontrollable_distribution}  satisfies Eq.~\eqref{eq:CBF}, 
respectively, 
we have
\begin{equation}
    \operatorname{Prob}[(X_{t+1},Y_{t+1})\in \mathcal{S}] \geq 1-\alpha \label{eq:finaleq}
\end{equation} 
\end{myprop}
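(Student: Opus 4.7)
The plan is to decompose the statement into a deterministic safety claim (ensured by the CBF condition) and a probabilistic containment claim (ensured by conformal prediction), then combine them by a simple monotonicity argument on probabilities.

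First, I would invoke the standard forward-invariance result for CBFs (from Ames et al.~\cite{ames2016control}). At each prediction update point, the predicted region $\hat{\mathcal{O}}_{Y_{t+1}}(\alpha)$ is treated as a fixed obstacle, so that $h_{xy}(x,y)$ reduces to $h(x)$ whose zero-superlevel set $C$ satisfies $C\subseteq \mathcal{C}$ by construction. Since any control input $u(t)\in K_{CBF}$ satisfies $L_f h(x)+L_g h(x)u\geq -\beta(h(x))$, the Nagumo-type theorem for CBFs yields forward invariance of $C$: for every $x(t)\in C$, one has $x(t')\in C$ for all $t'$ in the current prediction period, and in particular $X_{t+1}\in C$. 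This is a purely deterministic conclusion about the controlled state.

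Second, by the construction of $C$ as (the subset of) $\mathcal{C}=\{x\mid \forall y\in \hat{\mathcal{O}}_{Y_{t+1}}(\alpha): (x,y)\in \mathcal{S}\}$, the deterministic fact $X_{t+1}\in C$ implies the conditional statement
\begin{equation}
Y_{t+1}\in \hat{\mathcal{O}}_{Y_{t+1}}(\alpha) \;\Longrightarrow\; (X_{t+1},Y_{t+1})\in \mathcal{S}.
\end{equation}
At this point I would invoke the conformal prediction bound from Eq.~\eqref{eq:safe_region}, which gives $\operatorname{Prob}[Y_{t+1}\in \hat{\mathcal{O}}_{Y_{t+1}}(\alpha)]\geq 1-\alpha$. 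Monotonicity of probability measures then yields
\begin{equation}
\operatorname{Prob}[(X_{t+1},Y_{t+1})\in \mathcal{S}] \;\geq\; \operatorname{Prob}[Y_{t+1}\in \hat{\mathcal{O}}_{Y_{t+1}}(\alpha)] \;\geq\; 1-\alpha,
\end{equation}
which is exactly Eq.~\eqref{eq:finaleq}.

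The main obstacle I anticipate is not the chain of implications itself, which is essentially a one-line union of a deterministic and a probabilistic guarantee, but rather carefully justifying that the reduction from $h_{xy}$ to $h$ at each prediction update point is valid under the coupled dynamics. Concretely, one must argue that exchangeability of the calibration and test trajectories is preserved when the controlled input is chosen adaptively based on past data, so that the conformal bound in Eq.~\eqref{eq:safe_region} still applies to the realized $Y_{t+1}$. Since the joint-distribution sampling scheme from Sec.~4.2 draws entire coupled trajectories $W^{(j)}$ i.i.d., exchangeability does hold, and the argument goes through; I would state this explicitly as the bridge between Proposition~1 and the CBF invariance to complete the proof.
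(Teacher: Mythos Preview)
Your proposal is correct and follows essentially the same approach as the paper's proof: combine the conformal prediction containment bound \eqref{eq:safe_region} with the CBF forward-invariance guarantee so that $X_{t+1}\in C$ deterministically, and then conclude \eqref{eq:finaleq} via monotonicity of probability. The paper's proof is a terse one-sentence version of your argument (and even uses an undefined $\hat{\mathcal{C}}$ where you correctly spell out the obstacle-dependent safe set); your additional remarks on exchangeability under adaptive control are a welcome clarification that the paper omits.
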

\begin{proof}
With guarantee that $\operatorname{Prob}[Y_{t+1}\in \hat{\mathcal{O}}_{Y_{t+1}}(\alpha)] \geq 1-\alpha$, the control law $u(t)$ derived from $K_{CBF}$ for $\hat{\mathcal{C}}$ is effective in ensuring the safety of the entire system, meaning that $(X_{t+1}, Y_{t+1})$ remains within the safe set $\mathcal{S}$, as defined in Problem~1.
\end{proof}

\section{Simulation Results}

In this section, we illustrate and evaluate the proposed approach through simulations of a case study involving autonomous driving with uncontrollable pedestrians. This case study demonstrates the effectiveness of our framework in modelling the interactions between the autonomous vehicle and pedestrians, providing probabilistic safety guarantees under the uncertainty of pedestrian behavior.

\begin{figure}[t]
    \centering
    \includegraphics[width=0.48\textwidth]{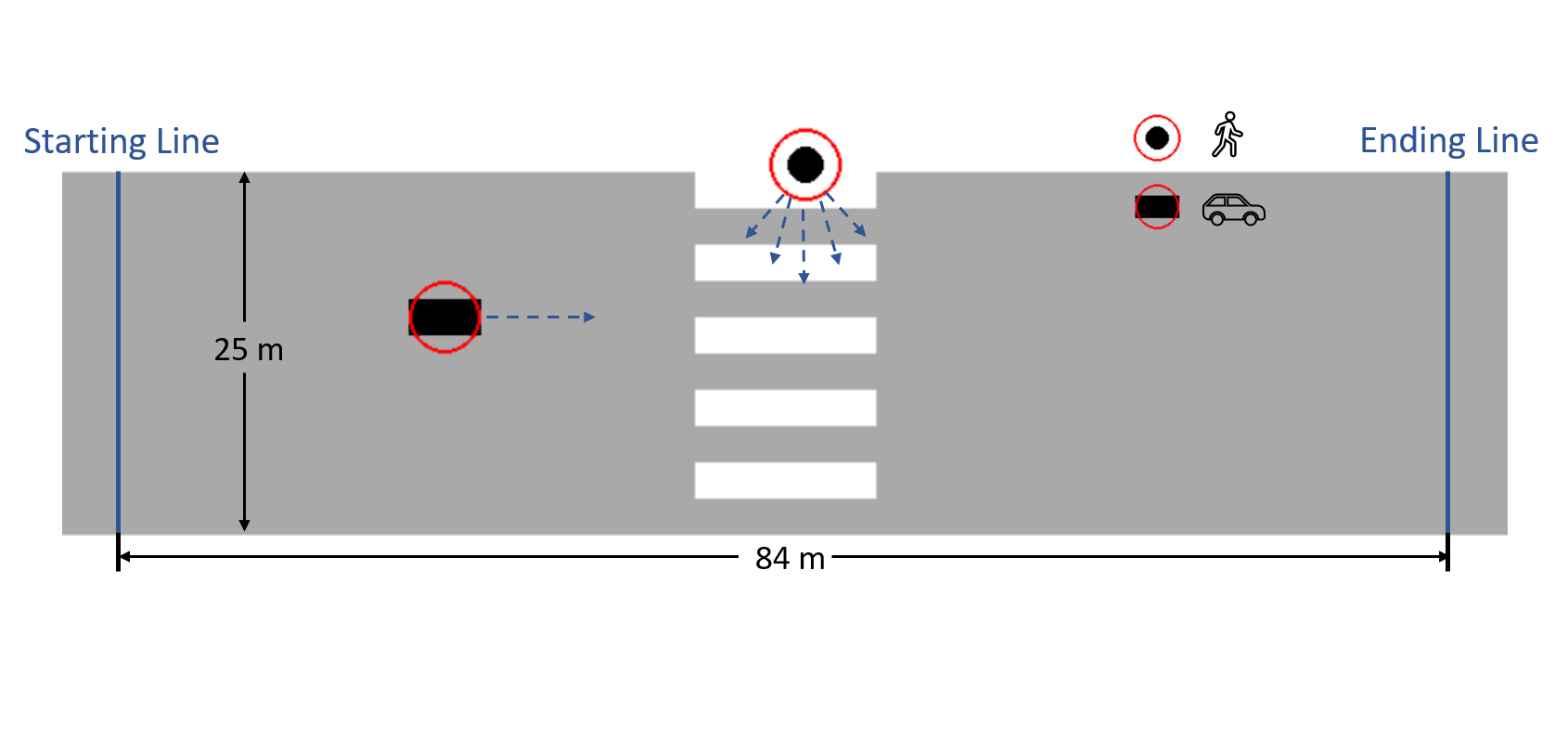} 
    \caption{The pedestrian's speed follows an unknown stochastic distribution affected by the state of the vehicle, i.e.,$\Delta Y \sim \mathcal{D}(X,Y)$. The pedestrian starts at a random position on one side of the crosswalk while the vehicle begins at the midpoint of the starting line. Red circles indicate collision volume.}
    \label{fig:map}
\end{figure}

\begin{figure*}[t]   
    \centering
    \begin{subfigure}[b]{0.4\textwidth}
        \centering
        \includegraphics[width=\textwidth]{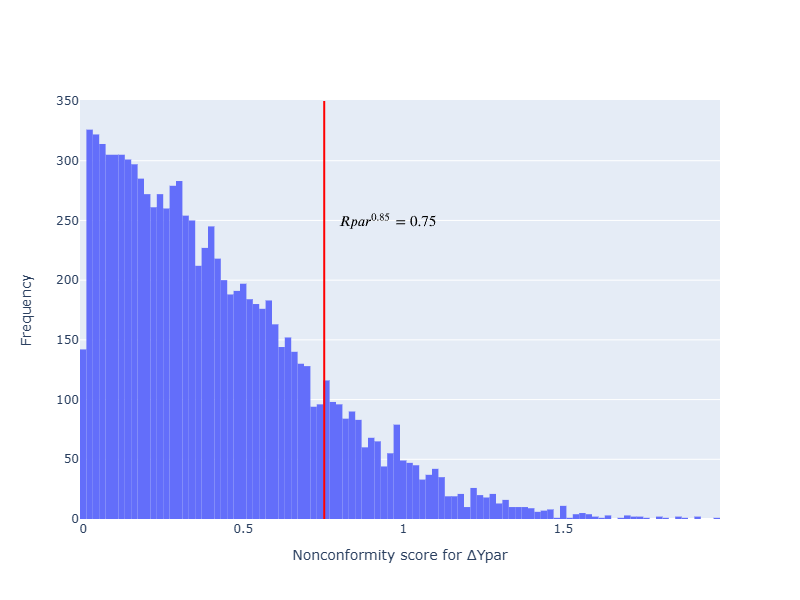} 
    \end{subfigure}
    \begin{subfigure}[b]{0.4\textwidth}
        \centering
        \includegraphics[width=\textwidth]{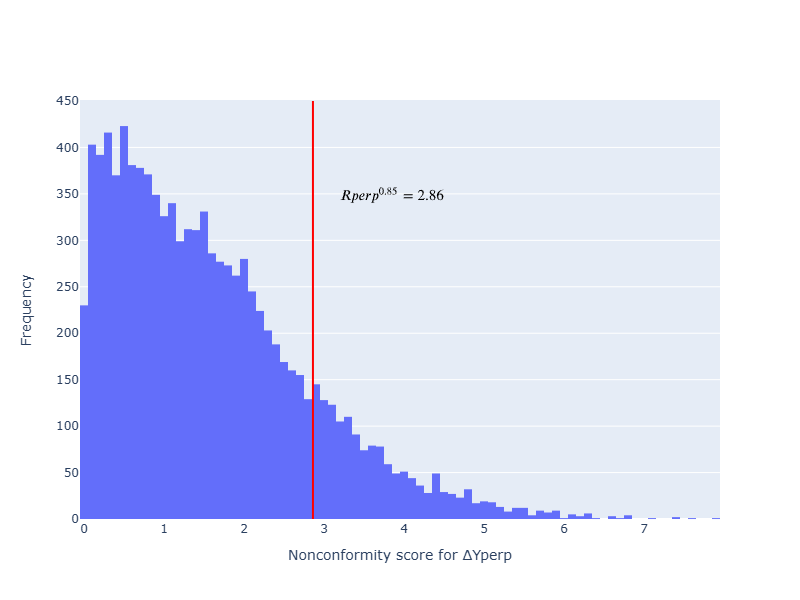} 
    \end{subfigure}
    \caption{Nonconformity scores histogram for pedestrian position displacement $(\Delta {Y_{par}},\Delta {Y_{perp}})$  on $D_{cal}$}
    \label{fig:score}
\end{figure*}

\subsection{Scenario Description and System Dynamics}
\textbf{Scenario Setup:} As illustrated in Fig. \ref{fig:map}, the scenario is set on a straight road segment with a total length of 84 m. A crosswalk is located within this segment, and the road has a width of 25 m. The starting line for the vehicle is positioned at one end of the road, while the ending line is at the opposite end. A pedestrian is present near the crosswalk, and a vehicle is approaching from the starting line. The control period is set to be $\Delta t $ = 0.1s.

\textbf{Vehicle Dynamics:}
For the sake of simplicity, we assume the vehicle moves straight through the crosswalk, setting the dimension \(n = 1\) with dynamic model $\dot{x}=u$. This choice aligns with real-world constraints, as vehicles rarely make left or right turns while crossing a zebra line. Nevertheless, our approach naturally extends to two or higher dimensions.  
The vehicle's speed range is \([0, 15]\)m/s, reflecting typical vehicle speeds observed in urban environments (up to the speed limit of 15 m·s$^{-1}$/54 km·h$^{-1}$/33 mph). The vehicle's position is updated based on its velocity.

\textbf{Pedestrian Behavior:}
The pedestrian's movement is modeled as a \emph{stochastic process} governed by a function of the vehicle's state-space vector and Gaussian noise. Mathematically, this can be expressed as:
\begin{equation}\label{unknown_distribution}
    \Delta{Y} \sim \mathcal{N}_2(\phi(X, Y),\Sigma), 
\end{equation}
where $\Delta Y  = (\Delta  {Y_{par}}, \Delta  {Y_{perp}}) \in \mathbb{R}^2$ follows a Gaussian distribution whose mean value is a function of both the state of the pedestrian and the controlled vehicle. Here the notation $\Delta  {Y_{par}}$ represents the velocity component parallel to the road direction, while $\Delta  {Y_{perp}}$ denotes perpendicular component. $\Sigma$ is the standard deviation $\Sigma = \operatorname{diag}(\sigma^2_x,\sigma^2_y) = \operatorname{diag}((\text{0.5m/s})^2, (\text{2m/s})^2)$. 
To illustrate, as the vehicle approaches at a higher speed, the pedestrian slows down $\Delta  {Y_{perp}}$, exhibiting more cautious crossing behavior. Simultaneously, $\Delta  {Y_{par}}$ increases in the direction of moving away from the vehicle as an evasive response.
Additionally, $\Delta  {Y_{par}}$ is influenced by $Y$. When approaching the left/right edge of the crosswalk area, the pedestrian tends to adjust the position toward the center of the crosswalk.
And as pedestrian movement is inherently unpredictable, this modelling introduces additional stochasticity to account for hesitation and spontaneous decision-making. 

Under this simulation setup, pedestrians exhibit a perturbed yet generally arced crossing trajectory around the vehicle. This behavior aligns with the natural tendency of pedestrians to adjust their paths dynamically while crossing. 

\subsection{Model Implementation}
\textbf{Interaction and Collision Avoidance:}
The pedestrian's trajectory is evaluated in real-time as the vehicle approaches. 
We track the pedestrian's crossing behavior which is influenced by the vehicle. The simulation continuously monitors the distance between the vehicle and the pedestrian, and the safe controller (SPARC) is applied to ensure safe control synthesis in avoiding collisions.

\textbf{Trajectory Predictor:} We trained a feedforward neural network to predict the future trajectory of the pedestrian. 
The dataset is generated by simulating a vehicle moving at a constant speed while tracking both the vehicle and pedestrian trajectories. The vehicle's speed is uniformly sampled from the range \([0, 15]\)m/s.
The collected data is structured following the format described in the \textit{IV. B} section.
The input is the history state measurement of the vehicle's and pedestrian's state. The training dataset $D_{\text {train }}$ contains $1 \times 10^6$ data, and the calibration dataset $D_{\text {cal}}$ for conformal prediction contains $1 \times 10^5$ data. 
The model consists of three fully connected layers with ReLU activation.
The input layer maps features to 64 hidden units, followed by a second layer of 64 units, and a final output layer predicting pedestrian speed $(\Delta  {Y_{par}}, \Delta  {Y_{perp}}) \in \mathbb{R}^2 $. The model use the Adam optimizer (learning rate 0.001) and MSE loss. A StepLR scheduler halves the learning rate every 10 epochs over 50 epochs.

\textbf{Control Barrier Functions:} For vehicle-pedestrian collision avoidance, we choose the following CBF:

\begin{equation}
    h(x) = \min_{y \in \hat{\mathcal{O}}_{y}} ||x - y|| 
     - d_\text{safe}
\end{equation}
where:
\begin{itemize}
    \item $x \in \mathbb{R}^{n}$ is the position of the vehicle,
    \item $\hat{\mathcal{O}}_{y} \subseteq \mathbb{R}^{m}$ is the predicted conformal region of the uncontrollable pedestrian, and we set $\hat{\mathcal{O}}_{y}(\alpha) = [\hat{y} - \bar{R}(\alpha), \hat{y} + \bar{R}(\alpha)] \subseteq \mathbb{R}^2$ as a weighted $\ell_\infty$-norm ball.
    \item $d_\text{safe}$ is the minimum safe distance to be maintained,
    \item $\|\cdot\|$ denotes the Euclidean norm.  
\end{itemize}

With $d_{\text{safe}}$ = 1.0 m, the collision volume is shown as the red circle in Fig. \ref{fig:map}.

To demonstrate the effectiveness of our method, we compare the following two cases in simulations:
\begin{enumerate}
    \item SPARC: Let $\alpha$ be the failure tolerance, randomly pick $\alpha = 0.85, 0.75, 0.5$, the results are shown in Table \ref{tab:collprob}.
    \item Vanilla CBF: We set $\alpha = 1$, which means that $\bar{R}(\alpha)=[0, 0]$.
    This  essentially reduces to the vanilla CBF \cite{ames2016control}. This method uses the direct output from the neural network output than the calibrated conformal region, providing no safety guarantee and failing to account for dynamic stochasticity.
    \item Random: In this experiment, the vehicle's initial velocity is uniformly sampled, and it follows the reference control $u_r$ during its motion. The proportion of trials resulting in a collision with the pedestrian is recorded. 
    
\end{enumerate}

\begin{table}[t]
\caption{Collision Probability}
\begin{center}
\begin{tabular}{c|c|c}
\hline
Controller & Collision(Absolute) & Collision(Relative)  \\ \hline
SPARC ($\alpha$=0.15)    & 1.62\%               & 14.46\% \textless{} 15\%         \\ \hline
SPARC ($\alpha$=0.25)     & 1.96\%                         & 17.50\% \textless{} 25\% \\ \hline
SPARC ($\alpha$=0.50)     & 4.21\%                        & 37.59\% \textless{} 50\%         \\ \hline
Vanilla CBF     & 10.80\%                         & 96.43\%  \\ \hline
Random     & 11.20\%                         & 100.00\%  \\ \hline
\end{tabular}
\end{center}
\label{tab:collprob}
\end{table}

\begin{figure}[thpb]
  \centering
  \includegraphics[width=0.48\textwidth]{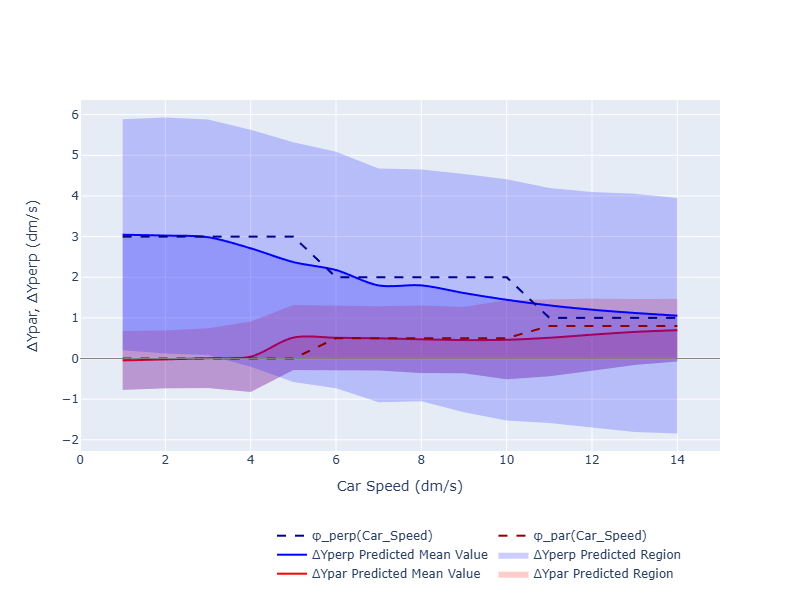}
  \caption{The prediction output with conformal prediction region. The plot also includes $\phi(\text{car speed})$ with all other parameters fixed as a noise-free reference ground truth.}
  \label{fig:cp_region}
\end{figure}
\subsection{Result Analysis}
\textbf{Trajectory Predictors and Conformal Prediction:}
Take failure probability tolerance $\alpha = 0.85$ as an example, Fig. \ref{fig:score} illustrates a calculation of the nonconformity scores used in our approach. 
The scores $R_ {Y_{par}}^{0.85}=0.75, R_ {Y_{perp}}^{0.85}=2.86$ represent the deviation between the predicted pedestrian position change $\Delta\hat{Y}$ and the ground truth, providing a measure of prediction uncertainty
, which is used to generate safe control decisions.

Fig. \ref{fig:cp_region} shows the how the network learns the underlying random distribution governing pedestrian behavior in relation to the vehicle’s state. Notably, our method does not rely on the precise accuracy, overall performance, or scene comprehension capabilities of the model, making it robust to model imperfections and uncertainties.
In our simulation framework, for the sake of simplicity, the upper limit of all nonconformity scores is employed as a proxy for the overall safety margin, and it is also feasible to utilize a grid-based distribution of nonconformity scores to devise a more aggressive control strategy.

\textbf{Safe Control Barrier Function Controller:}
The vehicle controlled by our SPARC framework is expected to avoid collisions with pedestrians. 
We set the failure probability tolerance $\alpha=0.15, 0.25, 0.5$, and desire $\text{Prob}[(X,Y) \in \mathcal{S}] \geq 1-\alpha$, which means $\text{Prob}[\text{vehicle and pedestrian collide}] \leq \alpha$. We performed $1\times 10^5$ experiments, each time step is set to 100, the collision probability are collected in Table \ref{tab:collprob}.
In 11.20\% of the trials (1120 experiments), the car crash on the pedestrian, and we care about whether our controller can successfully prevent collision in these experiment, so we divide absolute collision frequency by 11.20\% to get relative collision frequency. 
Due to the relatively high ratio of the standard deviation of the added stochastic component to the pedestrian's average speed, the vanilla CBF, which relies solely on predicted pedestrian positions for avoidance, struggles to account for this uncertainty and consequently suffers from a high failure rate$(96.43 \%)$. The collision rate of SPARC are $14.46\%(<15\%), 17.50\%(<25\%), 37.59\%(<50\%)$, which are all under their respective failure probability tolerance and significantly lower than vanilla CBF case $(96.43 \%)$, so our Proposition 2. holds in practice.

\section{Conclusion}
This work presents a new framework that integrates trajectory prediction with conformal prediction to enhance safety in multi-agent systems involving coupled controllable and uncontrollable agents. By incorporating probabilistic guarantees into the control barrier function framework, we ensure safe interactions between the controlled system and uncontrollable agents with high confidence.
Compared to previous approaches, our method accounts for the coupling between the controlled system and the uncontrollable agent by sampling and evaluating over an augmented state space, making it more applicable to real-world scenarios such as autonomous driving and human-robot interaction.

There are several future directions for this work that we plan to explore. 
The main objective of our work is to present a generic framework applicable to a broad class of control synthesis problems involving coupled uncontrollable agents, rather than being tailored to specific applications. In this work, for illustrative purposes, we use simplified vehicle and pedestrian models to demonstrate our approach.   
Regarding the complex nature of vehicle-pedestrian interactions, various datasets and prediction models have been developed to capture realistic behaviors. Notable datasets such as ETH \& UCY \cite{pellegrini2009you} and JAAD \cite{rasouli2017JAAD} provide benchmarks for pedestrian trajectory modeling, while methods like Social-LSTM \cite{alahi2016social} and Trajectron++ \cite{salzmann2020trajectron++} offer deep learning-based approaches for trajectory prediction. The TrajNet++ benchmark \cite{kothari2021human} further standardizes evaluation in this domain. In the future, we plan to focus specifically on autonomous driving by leveraging these real-world datasets and customized prediction methods to enhance the accuracy and applicability of our framework.  
Additionally, one limitation of our current approach is the assumption that precise state information for both the controlled system and the uncontrollable agent is fully accessible. In many practical applications, such state information must be obtained through perception devices like LiDAR, which can introduce estimation errors. To address this, we plan to extend our framework to perception-based settings, incorporating techniques to handle state estimation uncertainty and further improve the robustness of our approach in real-world scenarios.

\bibliographystyle{IEEEtran}
\bibliography{ref}

\end{document}